 \let\footnote=\endnote
\def\argmax{\operatorname{arg~max}}
\def\E{\mathbb{E}}
\def\P{\mathbb{P}}
\def\p{\mathsf{P}}
\def\ie{{\em i.e.}}
\def\d{\mathrm{d}}
\def\R{\mathbb{R}}
\def\Z{\mathbb{Z}}
\def\sinr{\mathsf{SINR}}
\def\snr{\mathsf{SNR}}
\def\sir{\mathsf{SIR}}
\def\i{\mathbf{1}}
\def\x{\mathsf{x}}
\def\l{\ell}
\def\T{\theta}
  \newtheorem{lemma}{Lemma}
  \newcommand{\h}[1]{\ensuremath{\mathsf{h}_{#1}}}
  \def\b{\eta}
  \def\G{\mathsf{G}}
  \def\cl{\mu}
\begin{document}
\title{Spatial Analysis of Opportunistic Downlink Relaying in  a Two-Hop Cellular System}
\author{Radha Krishna Ganti and Martin Haenggi}
%Department of Electrical Engineering\\
%University of Notre Dame\\
%Notre Dame, IN 46556, USA\\
%\{rganti,mhaenggi\}@nd.edu}
\maketitle
\begin{abstract} 
We consider a two-hop cellular system in which the mobile nodes help the base
station by relaying information to the dead spots. 
While   two-hop cellular schemes have been analyzed previously, the 
 distribution of the node locations has not been explicitly taken into account.
 In this paper, we model the node locations of the  base stations and the 
 mobile stations as a point process on the plane and then analyze the performance
 of  two different two-hop schemes in the downlink. In one scheme the node nearest to the
 destination that has decoded  information from the base station in the
 first hop is used as the  relay. In the second scheme the node 
 with the best channel to the relay that received information in 
 the first hop acts as a relay. In both these schemes we obtain the 
 success probability of the two hop scheme,  accounting for the  interference
 from all other cells.
 We use tools from stochastic geometry and point process theory to analyze the
 two hop schemes.  Besides the results obtained a main contribution of the paper
 is to introduce a mathematical framework that can be used to analyze arbitrary
 relaying schemes. Some of the main contributions of this paper are the analytical techniques introduced for the inclusion of the spatial locations of the nodes into the mathematical analysis.
  
\end{abstract}
%% LyX 1.6.2 created this file.  For more info, see http://www.lyx.org/.
%% Do not edit unless you really know what you are doing.
\section{Introduction}
Cellular systems are the most widely deployed wireless systems and provide
reliable communication services to billions around the world.
They consist of  base stations that serve a geographical area
called  cell. In most of the present  cellular systems, the base station (BS)
communicates  directly with the mobile users (MS) in its cell.  This single-hop
 architecture makes  it is difficult for the BSs to communicate with 
 MSs at the  cell  boundary because of the distance and the
inter-cell interference. So a base station will have to increase its power to
maintain the rate of transmission.   The dead spots problem can be countered by using more
 base stations, thereby  increasing the spatial reuse. But increasing  the
 number of base stations  can be prohibitively expensive  or even impossible.  
 The problem can be addressed more effectively by  moving away from  
 the paradigm of single-hop communication and   permitting the base station  to
 communicate with mobile stations at the boundary  by using the
other intermediate MSs in its cell in a sequence of hops. Although such
multi-hopping requires some significant changes in the present cellular system
architecture, it may help to effectively combat the dead spots
problem,  and hence the cellular multi-hopping  problem is worthy to investigate
\cite{neon:2001,hung:2004}.  
In this  paper, we analyze the  benefits  of two-hop cellular communication 
by comparing its performance with a traditional single-hop cellular system. A
two-hop system,
\begin{itemize}
  \item  may provide significant benefits over single-hop communication. 
  \item  does not have the implementation complexity of larger number of hops
	(in terms of 	routing and scheduling).
\end{itemize}

When a BS transmits, multiple MSs will  be able to receive the information, and hence
these  mobile nodes can help the BS  transmit information  to the cell edge.
 Since more than one MS can act as a relay, it is not clear how to choose a subset of these
 relays in a distributed fashion so as to reduce the
interference and increase the probability of packet delivery. In this paper, we
analyze simple relay selection schemes and
compare their performance with  direct transmission. We account for 
the inter-cell interference and the spatial structure of the transmitting nodes
in the analysis.

 We  use methods
from stochastic geometry and point process theory to model and study the two-hop
cellular system.  In particular we provide techniques based on  probability generating
functional of a point process to analyze the outage probabilities, and
we  provide asymptotic results  for the outage at high $\snr$ and low BS
density. The techniques presented in this paper can be  extended to
analyze more complicated  relay selection schemes, power control mechanisms and
 other multi-hop techniques.   The major emphasis of the paper is in the
 methodology and the techniques of  the analysis  rather than the specifics of
 the communication system. For example we concentrate only on two specific  relay selection methods although many more methods have been proposed in the literature.

\subsection{Previous work}
The problem of two-hop extensions of cellular system has been studied
extensively, and a provision for  a multi-hop technique  has been included
in the A-GSM standard \cite{neon:2001,hung:2004}. In \cite{sreng2002coverage},
a MS is selected to help the BS depending on the  large-scale path-loss on the 
BS-relay link and the relay-destination link. \cite{jingmei2004performance} considers
a similar problem, but the MSs that can act as relays are assumed to be located
on a circle around the BS,  and the authors  provide various power allocation schemes and
verify their performance by simulations. 
The present problem is also
very similar to the problem of  opportunistic relay selection. In \cite{laneman2002distributed,
laneman2004cooperative} a detailed analysis of a opportunistic two-hop relaying scheme   obtaining full diversity order
using distributed space-time codes has been provided. But a distributed
space-time code requires  very tight coordination and precise signaling  among
the relays, which increases the overhead and complexity in  the system. An
alternative approach is to choose the {\em best} relay, and 
in  opportunistic  relaying (OR) \cite{bletsas2006simple} a relay is chosen so as to maximize the minimum
signal-to-noise ratio ($\snr$) of  the source-relay and the relay-destination links.  In selection
cooperation (SC)\cite{beres2006selection,michal:2008}   the  relay   with maximum
relay-destination $\snr$ is chosen and has been shown that  SC and OR provide a similar
diversity order. In  
\cite{laneman2002distributed,bletsas2006simple,beres2006selection,michal:2008},
 distributed relay selection schemes are analyzed and  asymptotes of the outage are provided for high $\snr$.  The asymptotes provided are  functions of the means of fading  coefficients  between the source, relays and the destination. 
Averaging these results with respect to the spatial distribution of the nodes is 
difficult and hence we use an alternative approach.
   In our approach we 
model the node locations in a statistical manner and incorporate this information in the analysis from the start rather than averaging over
the spatial locations at the end. Our emphasis is on low-overhead schemes that
can readily be implemented. 

The paper is organized as follows: In Section \ref{sec:sys} the system model is
introduced, assumptions stated and  the  metrics used in the paper defined. In
Section \ref{sec:direct} the outage probability in the direct connection between the
BS and its destination is  derived. In Sections  
\ref{sec:method2} and  \ref{sec:method3} the outage probability of the  two-hop
schemes employing different relay selection schemes are analyzed. The asymptotic
gain of  using the two-hop schemes over the direct connection is also studied
in these sections. In Section \ref{sec:sim} simulation results are provided and
 compared to the theory.
%%%%
  \section{System Model}
\label{sec:sys}
We assume that the BSs (cell towers) are arranged on a square lattice of density
$\lambda_b$.
\[
\Phi_{b}={\left\{ \frac{x}{\sqrt{\lambda_{b}}},\quad x\in\mathbb{Z}^{2}\right\}
}.\]
 The analysis in this paper generalizes  in a straightforward manner to any deterministic
arrangement of BS.
We assume that $n_x$ MSs are available to assist a BS $x\in \Phi_b$.
More precisely, the locations of the mobile stations   that assist the base station $x$
form a Poisson point process \cite{stoyan} (PPP) $\Phi_x$ of density
$\lambda_x(y)=\b(y-x)$. For example choosing   $\b(y) =
\mathbf{1}_y([-1/2,1/2]^2)$ and $\lambda_b =1$
would lead to a square coverage area  for each base station. We use $\i_x(A)$ to
denote the indicator function of set $A$. See Figure
\ref{fig:illus_cellular}. 
Observe that it is not necessary for a MS to be associated to its nearest BS,
\ie,  some MSs may be outside the Voronoi cell of their BS.  We further make the following
assumptions:
\begin{enumerate}
  \item The average number of  MS that each BS serves is finite, \ie,
	\[N= \int_{\R^2}\b(x) \d x <\infty .\] This assumption implies that
	$\Phi_x$  cannot be homogeneous \cite{stoyan, verejones}.
  \item The  locations of the mobile users associated with  different base stations are independent. 
\end{enumerate}
Since the number of MSs in each cell is Poisson with mean $N=\E[n_x]$, each cell is empty with probability $\exp(-N)$. We shall use $\cl$ to denote the probability that a
cell is not empty, \ie, $\cl=1-\exp(-N)$.
\begin{figure}[h]
  \begin{center}
\includegraphics[width=3.5 in]{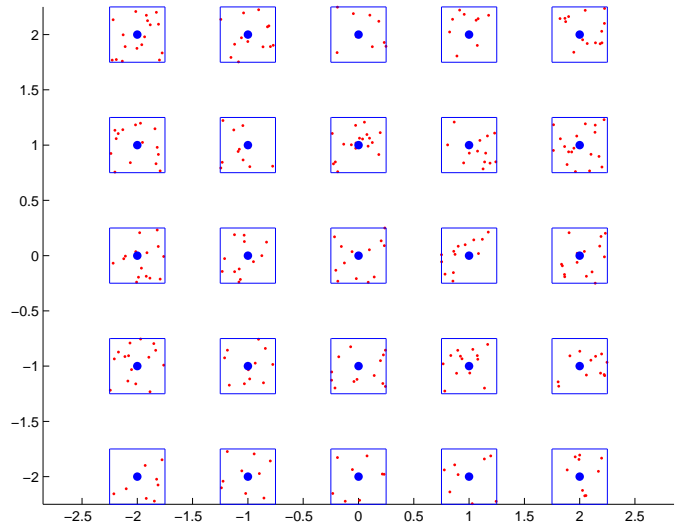}
  \end{center}
  \caption{Illustration of the cellular system with $\lambda_b=1$ and $\eta(y) =
  50\cdot\mathbf{1}_y([-0.25,0.25]^2)$. So on a average there are  $12.5$ MSs per
  cell. The bold dots represent the BSs and the smaller dots the MSs. 
  The white spaces between the cells may consist of other cells which transmit at
  a different frequency. We may model the case  where  the  neighboring
  cells  use the same frequency by choosing $\eta(y) =
  \mathbf{1}_y([-0.5,0.5]^2)$. }
  \label{fig:illus_cellular}
\end{figure}
Independent Rayleigh fading is assumed between any pair of nodes and also across
time, and the power fading coefficient between a node $x$ and node $y$ is denoted
by $\h{xy}$. Hence $\h{xy}$ is an exponential random
variable with unit mean.  The path-loss model is denoted by 
$\l(x):\ \R^{2}\setminus\{o\}\rightarrow\R^{+}$ and is a continuous, positive,
non-increasing function of $\Vert x\Vert$  that satisfies  \begin{equation}
\int_{\R^{2}\setminus B(o,\epsilon)}\l(x)\d x<\infty,\quad
\forall\epsilon>0,
\label{eq:int}
\end{equation}
where $B(a,r)$ denotes a disc of radius $r$ centered around $a$.
$\l(x)$ is usually taken to be a power law in one of the  forms:
\begin{enumerate}
  \item Singular path-loss model: $\Vert x\Vert^{-\alpha}$.
  \item Non-singular path-loss model: $(1+\Vert x\Vert^{\alpha})^{-1}$ or
	$\min\{1,\Vert x\Vert^{-\alpha}\}$.
\end{enumerate}
The integrability condition \eqref{eq:int} requires $\alpha>2$ in all the
above models.
Assuming simple linear receivers and treating interference as noise, the communication between $x$ and $y$ is successful if 
\begin{equation}
  \sinr(x,y,\Phi) = \frac{p_x\h{xy}\l(x-y)}{\sigma^2+\sum_{z\in \Phi}p_z\h{zy}
  \l(z-y)} >\T.
  \label{SINR}
\end{equation}
We also assume $\T >1$ which implies at most one transmitter can connect to a
receiver.
Here $\Phi$ is the set of interfering transmitters, $p_z$ is the
 transmission power used by a transmitter located at $z$ and  $\sigma^2$ is the 
 the additive white Gaussian noise power at the receiver. We make the following
 assumptions:
 \begin{enumerate}
   \item In the two-hop schemes that will be analyzed,  BSs
	 transmit in the even time slots and the MSs transmit in the odd time slots,
	 synchronized 	 across all cells.
   \item Each  base station $x$   has an additional mobile station, the {\em
	 destination} at $r(x)$ with $\|r(x)-x\|=R$, to which
the BS wants to transmit information.  {This additional
node just receives and never transmits.} 
\item All the BSs transmit with equal  power $P$.
 \end{enumerate}

\emph{Notation: }
\begin{itemize}
\item Define \[
\mathbf{1}(x\rightarrow y\mid\Phi)=\mathbf{1}(\sinr(x,y,\Phi)>\T).\]
 $\mathbf{1}(x\rightarrow y\mid\Phi)$ is the indicator  random variable
that is equal to one if a transmitter at $x$ is able to connect to
a receiver $y$ when the interfering set is $\Phi$.
\item Define \[
\hat{\Phi}(x)=\left\{ y\in\Phi_{x}:\ \mathbf{1}(x\rightarrow
y\mid\Phi_{b}\setminus\{x\})\right\} .\]
 $\hat{\Phi}(x)$ is  the set of MSs in the  cell  of BS $x $ to which
the BS  $x$ is able to connect in the first hop (even time slots). 
\end{itemize}

\emph{Metric:} Let $\p_{d}$ denote the probability that a BS
can connect to its destination directly in the first hop. Since all BSs are
identical \begin{equation}
  \p_{d}=\mathbb{E}\mathbf{1}(o\rightarrow r(o)\mid \Phi_b\setminus\{o\}).\label{eq:p1}\end{equation}
where $o$ denotes the origin $(0,0)$.
 A BS can connect to multiple MSs in its cell, and these connected MS are the  potential transmitters in the second hop. In the relay
selection methods studied in the next section, a subset of these potential
transmitters $R_{x}\subseteq\hat{\Phi}(x)$ are selected for each
$x\in\Phi_{b}$ to transmit in the next hop. Let the probability that
a relay can connect to its intended destination (determined by the
source to which it connects in the first hop) in the second hop be
$\p_{r}$, \ie, \begin{eqnarray}
  &  & \p_{r}=1- \mathbb{E} \prod_{\x\in R_o}1-\mathbf{1}(\x\rightarrow r(o)\mid
  \Psi\setminus\{\x\}),\label{eq:p2}\end{eqnarray}
where $\Psi=\bigcup_{x\in\Phi_{B}}R_{x}$ is the set of all transmitters in the second hop. Here we are assuming
no cooperative communication between nodes which have the same information, 
and hence relays belonging to the same cluster also
interfere with each other in the second hop.  Since $\T>1$, at most one
transmitter can connect to a receiver and thus  
\begin{eqnarray}
  &  & \p_{r}= \mathbb{E} \sum_{\x\in R_o}\mathbf{1}(\x\rightarrow r(o)\mid
  \Psi\setminus\{\x\}),\label{eq:p3}\end{eqnarray}
and the probability of success for the two-hop scheme is 
\[\p_s = 1-(1-\p_d)(1-\p_r).\]
The BS can potentially transmit in the second hop instead of using the MS as
intermediate relays. This  retransmission scheme will be used as the   base
reference,
and  the performance  of the relay selection schemes will be compared with this
retransmission scheme.
The gain in using the two-hop scheme over the retransmission scheme can be
characterized as
 \begin{equation}
   \G(\snr,\lambda_b) =
   \frac{(1-\p_d)^2}{(1-\p_d)(1-\p_r)}=\frac{1-\p_d}{1-\p_r},
   \label{eq:cel:gain}
 \end{equation}
 where \[\snr=\frac{P\l(R)}{\sigma^2}\] is the  received SNR for the direct
 transmission. To compare the direct transmission with the relay selection
 scheme,  power is allocated   across the selected relays in the second hop
 so that  the total power is equal to $P$. Another  pertinent metric to capture 
	 the performance of the network is the diversity gain, defined as
\[\d(\lambda_b) = -\lim_{\snr \rightarrow \infty}\frac{\log(1-\p_s)}{\log(\snr)}.\]
From the definition of the diversity and the gain, the following relation
follows:
\[\d_2(\lambda_b)- \d_d(\lambda_b)=\lim_{\snr \rightarrow
\infty}\frac{\log(\G(\snr,\lambda_b))}{\log(\snr)},\]
where $\d_d$ is the diversity gain for the  single-hop retransmission scheme, and
$\d_2$ is the diversity gain of the two-hop scheme. { From the definition of
$\p_s$ it can be observed that the information received in the two time slots is
decoded independently.} 

In the next sections, we will analyze the success probability  $\p_r$ and the
diversity order of the relay selection schemes. It is easy to observe that
the probability $\p_r$ of any  relay selection scheme does not tend to one by
increasing the   $\snr$  because of the interference caused by 
transmissions in other cells. So  to evaluate the asymptotic performance
of the system, we scale the BS density as
\begin{equation}
  \lambda_b = \snr^{-\beta}, \quad \beta \geq 0.
  \label{eq:scale1}
\end{equation}
As will be evident in the next section, if the signal-to-interference ratio is
defined as
\begin{equation}\sir = \frac{\l(R)}{\sum_{\x \in \Phi_b\setminus\{ o\}}\l(\x-r(o))},
  \label{eq:sir}
\end{equation}
 the scaling in \eqref{eq:scale1} translates to
\[\sir =\Theta(\snr^{\frac{\alpha\beta}{2}}).\]
So the system is interference-limited when $\beta < 2/\alpha$
and noise-limited otherwise. Hence the scaling in \eqref{eq:scale1} helps us
 evaluate the performance of the system  by varying $\beta$.
In practice this scaling can be achieved by frequency planning and decreasing the spatial
resuse factor.  We now begin with the analysis of the  direct transmission  scheme. 
%%%%%%%%%%%%%%%%%%%%
\section{First Hop: Base Station Transmits}
\label{sec:direct}
\subsection{Direct Connection}
When the BSs transmit, the inter-cell interference, fading and the noise may
cause the transmission to fail. The probability of direct connection is given 
by 
\begin{eqnarray}
  \p_{d}&=&\mathbb{E}\mathbf{1}(o\rightarrow r(o)\mid \Phi_b \setminus \{o\})\\
  &=&\mathbb{P}\left(\frac{\h{xy}\l(R)}{\frac{\sigma^2}{P}+\sum_{y\in
  \Phi_B\setminus \{o\}}\h{yr(o)}
  \l(y-r(o))} >\T \right)\nonumber \\
  &=&\exp\left(-\frac{\T\sigma^2}{P\l(R)}\right)\prod_{y\in \Phi_b\setminus
  \{o\}}\frac{1}{1+\frac{\T}{\l(R)}\l(y-r(o))}\nonumber \\
 &=&\exp\left(-\frac{\T}{\snr}\right)\Delta(r(o)),  \label{eq:direct}
\end{eqnarray}
where  \[\Delta(x)= \prod_{y\in \Phi_b\setminus
  \{o\}}\frac{1}{1+\frac{\T}{\l(x)}\l(y-x)}.\]
%  \begin{figure}
%  \begin{center}
%\includegraphics[width=3.5 in]{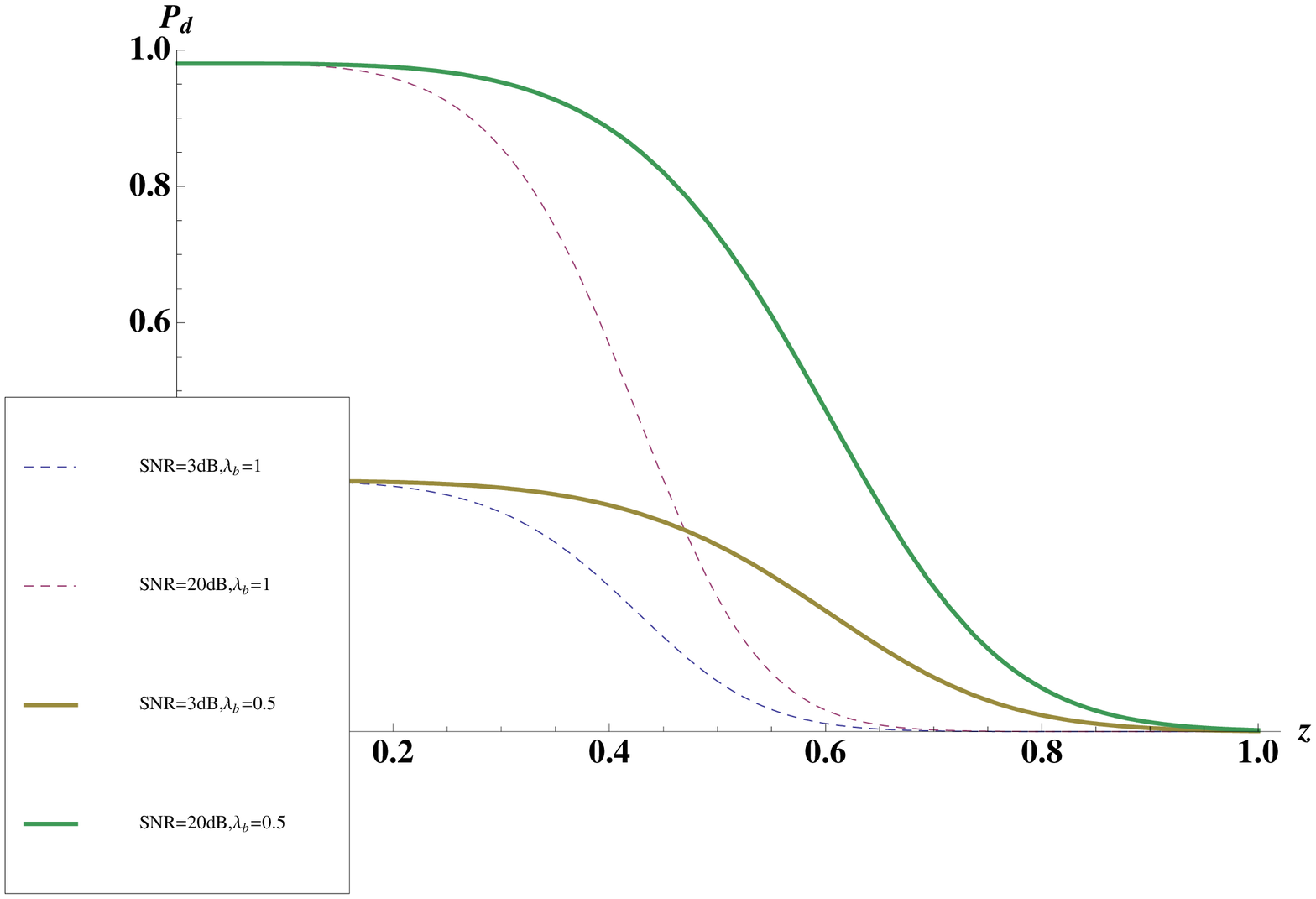}
%  \end{center}
%  \caption{$\p_d$ versus $\|R\|$ for $\l(x)=\|x\|^{-4}$, $\T=2$.}
%  \label{fig:illus_first_hop_success}
%\end{figure}
%\begin{figure}
%  \begin{center}
%\includegraphics[width=3.5 in]{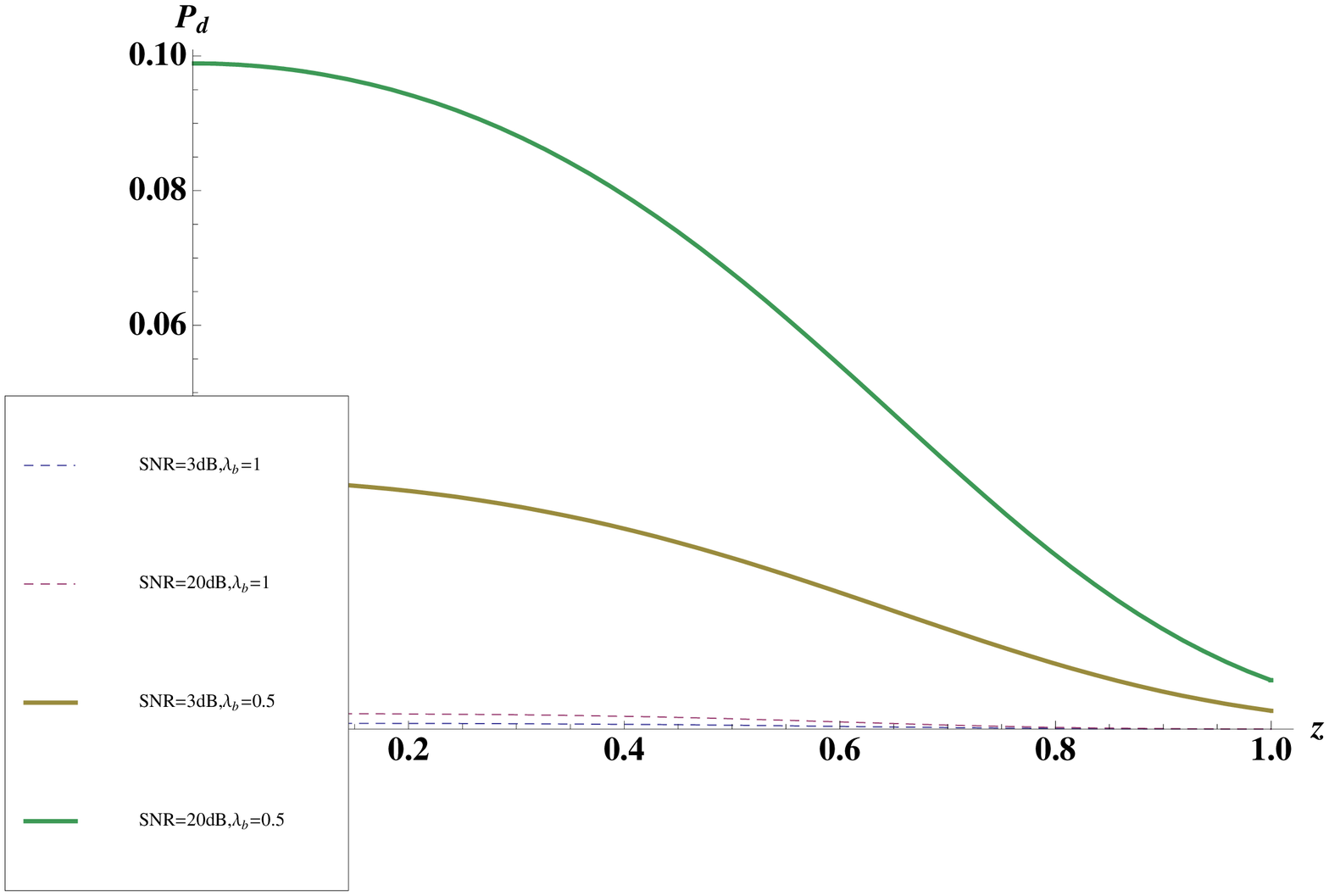}
%  \end{center}
%  \caption{$\p_d$ versus $\|R\|$ for $\l(x)=1/(1+\|x\|^{4})$, $\T=2$.}
%  \label{fig:illus_first_hop_success_non_sing}
%\end{figure}
%From Figure \ref{fig:illus_first_hop_success}, we observe  that the noise  
%decides the outage for nodes near to the BS while interference dictates
%the outage for far-away nodes when $\l(x)=\|x\|^{-4}$, \ie, a  singular
%path-loss model. This is because for the singular path-loss model the received
%power  is very large for nodes close to the  BS and the only limiting factor
%is the noise. 
%From Figure \ref{fig:illus_first_hop_success} we see that interference is a 
%big limiting factor when a non-singular path-loss model is used.
The following lemma is required  to analyze the asymptotics of the success
probability.
  \begin{lemma}
	When $\l(x)=\|x\|^{-\alpha}$ or $\l(x)=1/(1+\|x\|^{\alpha})$, 
	\[\lim_{\lambda_b\rightarrow0}\frac{1-\Delta(x)}{\lambda_b^{\alpha/2}}=\frac{\T
	C(\alpha)}{\l(x)},\]
	where 
	\begin{equation}
	  C(\alpha) =
	  \frac{\xi(\alpha/2,0)\left[\xi(\alpha/2,1/4)-\xi(\alpha/2,3/4)\right]}{2^{\alpha-2}}.
	  \label{eq:epstien}
	\end{equation}
   $\xi(s,b)= \sum_{k=0, k\neq -b}^\infty (k+b)^{-s}$ is the generalized Riemann zeta
   function. 
   \label{lemma1}
 \end{lemma}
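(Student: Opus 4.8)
The plan is to pass to logarithms and linearize. Writing $a_y := \frac{\T}{\l(x)}\l(y-x)$ for $y\in\Phi_b\setminus\{o\}$, we have $\log\Delta(x) = -\sum_{y\in\Phi_b\setminus\{o\}}\log(1+a_y)$. Since the BS lattice has spacing $\lambda_b^{-1/2}$, every nonzero $a_y$ tends to $0$ as $\lambda_b\to 0$, so both $\log\Delta(x)\to 0$ and $\Delta(x)\to 1$. Using $\log(1+a)=a+O(a^2)$ together with $1-e^{-t}=t+O(t^2)$, I would reduce the claim to the two estimates
\[
  \sum_{y\in\Phi_b\setminus\{o\}} a_y = \frac{\T}{\l(x)}\,C(\alpha)\,\lambda_b^{\alpha/2}+o(\lambda_b^{\alpha/2}),
  \qquad
  \sum_{y\in\Phi_b\setminus\{o\}} a_y^2 = o(\lambda_b^{\alpha/2}),
\]
after which $1-\Delta(x)=\sum_y a_y + O\!\left((\textstyle\sum_y a_y)^2\right)+O(\sum_y a_y^2)$ yields the result upon dividing by $\lambda_b^{\alpha/2}$.

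The whole problem thus reduces to the lattice sum $S:=\sum_{y\in\Phi_b\setminus\{o\}}\l(y-x)$. Parametrizing $\Phi_b\setminus\{o\}=\{n/\sqrt{\lambda_b}:n\in\Z^2\setminus\{0\}\}$ and factoring out the scale, I would write, for the singular model,
\[
  \frac{S}{\lambda_b^{\alpha/2}} = \sum_{n\in\Z^2\setminus\{0\}}\|n-\sqrt{\lambda_b}\,x\|^{-\alpha},
\]
and for the non-singular model
\[
  \frac{S}{\lambda_b^{\alpha/2}} = \sum_{n\in\Z^2\setminus\{0\}}\frac{1}{\lambda_b^{\alpha/2}+\|n-\sqrt{\lambda_b}\,x\|^{\alpha}}.
\]
In both cases each summand converges to $\|n\|^{-\alpha}$ as $\lambda_b\to0$ (here $x$ is fixed, so $\sqrt{\lambda_b}\,x\to o$). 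Once $\sqrt{\lambda_b}\,\|x\|\le 1/2$ one has $\|n-\sqrt{\lambda_b}\,x\|\ge\|n\|/2$ for every $n\neq0$, so each summand is dominated by $2^\alpha\|n\|^{-\alpha}$, which is summable for $\alpha>2$; dominated convergence then gives $\lim_{\lambda_b\to0}S/\lambda_b^{\alpha/2}=\sum_{n\in\Z^2\setminus\{0\}}\|n\|^{-\alpha}$. The same scaling applied to $\l(y-x)^2$ produces a prefactor $\lambda_b^{\alpha}$ times a sum converging to $\sum_{n\neq0}\|n\|^{-2\alpha}<\infty$, so $\sum_y a_y^2=O(\lambda_b^{\alpha})=o(\lambda_b^{\alpha/2})$, discharging the second estimate.

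It then remains to identify the Epstein zeta value $\sum_{n\in\Z^2\setminus\{0\}}\|n\|^{-\alpha}=\sum_{m\ge1}r_2(m)\,m^{-\alpha/2}$, where $r_2(m)$ counts representations of $m$ as a sum of two squares, with the closed form $C(\alpha)$. Using the classical identity $r_2(m)=4\sum_{d\mid m}\chi(d)$ with $\chi$ the nonprincipal character modulo $4$, the Dirichlet series factors as $4\,\zeta(\alpha/2)\,L(\alpha/2,\chi)$; writing $\zeta(\alpha/2)=\xi(\alpha/2,0)$ and $L(\alpha/2,\chi)=\sum_{k\ge0}[(4k+1)^{-\alpha/2}-(4k+3)^{-\alpha/2}]=4^{-\alpha/2}[\xi(\alpha/2,1/4)-\xi(\alpha/2,3/4)]$, and collecting the factor $4\cdot4^{-\alpha/2}=2^{-(\alpha-2)}$, recovers exactly \eqref{eq:epstien}.

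The main obstacle is the passage to the limit inside the infinite lattice sum: one must exhibit a $\lambda_b$-uniform summable majorant so that dominated convergence applies, and at the same time control the error from linearizing $\log(1+a_y)$ uniformly over the whole scale-dependent lattice rather than term by term. Both are handled by the single bound $\|n-\sqrt{\lambda_b}\,x\|\ge\|n\|/2$, valid for all small $\lambda_b$, which makes the tail uniformly summable and forces the quadratic correction down to $O(\lambda_b^{\alpha})$; the concluding identification with $C(\alpha)$ is then a purely number-theoretic rewriting.
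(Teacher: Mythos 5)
Your proof is correct and follows essentially the same route as the paper's: both reduce the claim to the rescaled lattice sum $\sum_{n\in\Z^2\setminus\{o\}}\l(n/\sqrt{\lambda_b}-x)=\lambda_b^{\alpha/2}\sum_{n\neq o}\l(n-x\sqrt{\lambda_b})$ and identify the limit with the Epstein zeta constant; the paper linearizes via the sandwich $e^{-t}\le\prod(1+a_y)^{-1}\le(1+t)^{-1}$ with $t=\sum a_y$, while you take logarithms and control $\sum a_y^2=O(\lambda_b^{\alpha})$, which is a minor variation. You do supply two details the paper leaves implicit --- the uniform summable majorant $2^{\alpha}\|n\|^{-\alpha}$ justifying the interchange of limit and sum, and the factorization $\sum_m r_2(m)m^{-\alpha/2}=4\zeta(\alpha/2)L(\alpha/2,\chi)$ recovering the closed form of $C(\alpha)$, which the paper only cites --- and both are correct.
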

  \begin{proof}
	We  consider the case of $\l(x)=\|x\|^{-\alpha}$;  the other case follows
	similarly. 	From the 	 definition of $\Delta(x)$  it follows that
	\begin{eqnarray*}
	  \exp\Big(-\T\l(x)^{-1}\sum_{y\in\Phi_b\setminus\{o\}}\l(y-x)\Big)
	  &\leq& \Delta(x)\leq  
	  \Big(1+ \T\l(x)^{-1}\sum_{y\in\Phi_b\setminus\{o\}}\l(y-x)\Big)^{-1}.
	\end{eqnarray*}
    We have	
	\begin{eqnarray*}
	  \sum_{y\in\Phi_b\setminus\{o\}}\l(y-x)&=&\sum_{y\in\Z^2\setminus\{o\}}\l\left(\frac{y}{\sqrt{\lambda_b}}-x
	  \right)\\
	  &=&\lambda_b^{\alpha/2}\sum_{y\in\Z^2\setminus\{o\}}\l(y-x\sqrt{\lambda_b}).
	  \label{}
	\end{eqnarray*}
	Dividing both sides by $\lambda^{\alpha/2}$ and taking the limit, the result  follows from the definition of the Epstein zeta function
\cite{epstein}.
  \end{proof}
We have $C(3)\approx9.03362$ and  $C(4)\approx 6.02681$. From the derivation of
the above lemma we observe that 
$\sir \sim \snr^{\alpha\beta/2}\l(R)C(\alpha)^{-1}$ where the definition of $\sir$
is provided in \eqref{eq:sir}.
  Using the above lemma, the asymptotic expansion of $\p_d$ for
  $\lambda_b=\snr^{-\beta}, \beta\neq 0$, at high
  $\snr$ is
  \begin{equation}
	\p_d \sim \left\{ \begin{array}{ll}
	   1-\T\snr^{-1}& \alpha\beta>2\\
	   1-\T\left( 1+C(\alpha)\l(R)^{-1}\right)\snr^{-1}&\alpha\beta=2\\
	  1-\T C(\alpha)\l(R)^{-1}\snr^{-\alpha\beta/2}& 0<\alpha\beta<2,\\
	  \end{array}\right.
	\label{eq:asymp_direct}
  \end{equation}
and  the diversity  gain of the direct transmission  is 
\[\d_d(\snr^{-\beta})=\min\left\{1,\frac{\beta\alpha}{2}\right\}.\]
So for the direct transmission, $\beta<2/\alpha$  corresponds to the interference-limited regime  and
$\beta>2/\alpha$
corresponds to the noise-limited regime. From Figure \ref{fig:direct_error} we
 note that the asymptotes in \eqref{eq:asymp_direct} are close to the true
$1-\p_r$ even at moderate $\snr$.
\begin{figure}[h]
  \begin{center}
	\includegraphics[width=12cm]{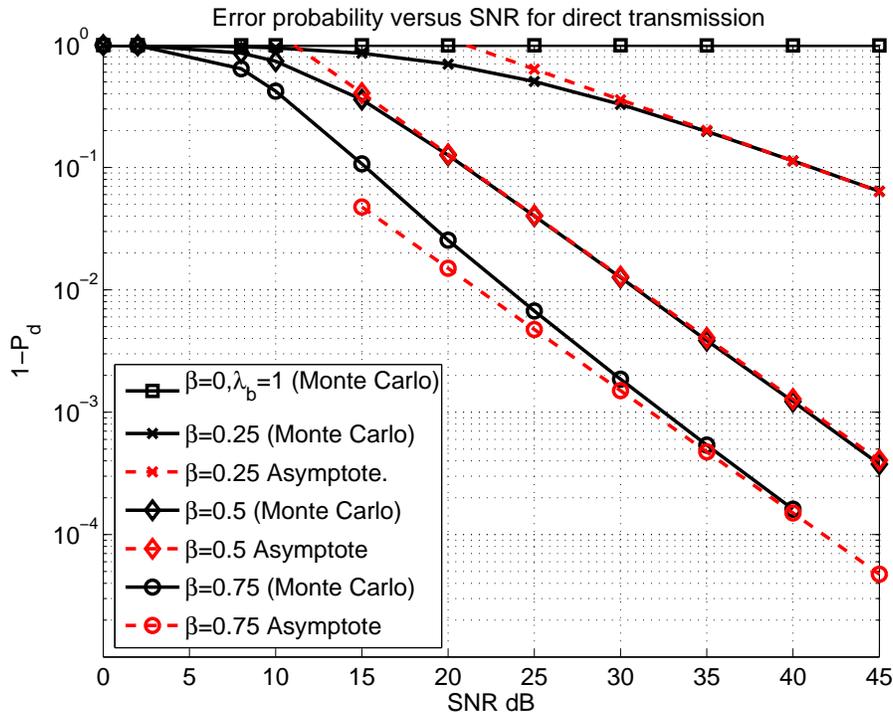}
  \end{center}
  \caption{ Outage probability $1-\p_d$ versus $\snr$  for $\lambda_b
  =\snr^{-\beta}$ with  different
  $\beta$. The system parameters are $\alpha=4$, $\T=1.5$, $r(o)=(0.5,0.5)$ and
  $\l(x)=(1+\|x\|^{4})^{-1}$. The dashed lines
  are  the asymptotes derived in \eqref{eq:asymp_direct}.  Observe the difference in the slopes
  of the error curve  for $\beta<0.5$ and $\beta\geq 0.5$.}
  \label{fig:direct_error}
\end{figure} 
In the scaling  law provided, observe that the distance of the receiver from the
BS is fixed.
%%%%
\subsection{Properties of the potential relay sets $\hat{\Phi}(x)$.}
 In this subsection, the properties of the node set that the BS at the origin is able
to connect to are analyzed. When the BSs transmit, the interference 
seen by two MSs  is independent. So the set of MSs to which  the BS at
the origin can connect to is an independent thinning of $\Phi_o$.
Hence $\hat{\Phi}(o)$ is also a PPP and since the thinning  depends on
the position, the resulting process is  inhomogeneous. Hence the intensity of
$\hat{\Phi}(o)$ is
\[\delta(x)=\b(x)\mathbb{E}\mathbf{1}(o\rightarrow x \mid\Phi_b\setminus\{o\}).\]
Following a procedure similar to the derivation of \eqref{eq:direct}, the
intensity is given by  
\begin{equation}
  \delta(x)=\b(x)\exp\left(-\frac{\T}{\snr}\frac{\l(R)}{\l(x)}\right)\Delta(x).
  \label{eq:final_density}
\end{equation}
The average number of MSs  which the BS  is able to connect to is 
\begin{eqnarray}
\mathbb{E}  \sum_{x\in \Phi_o}\mathbf{1}(o\rightarrow x
  \mid\Phi_b\setminus\{o\})  &=& \int_{\R^2}\delta(x)\d x,
  \label{eq:avg_sec}
  \end{eqnarray}
  which follows from the Campbell-Mecke theorem \cite{stoyan}. 
The average distance over which  the BS at the origin can connect is
\begin{eqnarray}
  L &=& \frac{\mathbb{E}  \sum_{x\in \Phi_o}\|x\|\mathbf{1}(o\rightarrow x
  \mid \Phi_b)}{\mathbb{E}  \sum_{x\in \Phi_o}\mathbf{1}(o\rightarrow x
  \mid \Phi_b)}\\
  &=&\frac{\int_{\R^2}\|x\|\delta(x)\d x}{\int_{\R^2}\delta(x)\d x}.
\end{eqnarray}
%If the BS does not retransmit, then there is a chance that the packet 
%might get lost in the  first slot. The gain in allowing all the MS in the
%cell to listen to the packet than just a particular receiver is equal to
%\[G_1(z)=\frac{1-P_d}{\exp(-\int_{\R^2}\delta(x)\d x)}.\] The denominator is the
%probability that the set $\hat{\Phi}(o)$ is empty and follows from the void
%probability of PPP.
%\begin{theorem}
%  When the inter cell interference is absent, \ie, $\lambda_b \rightarrow 0$ and
%  $\int\eta(x)\d x \geq 1$, a  necessary condition  for $G_1(z) >1$ is 
%  \[\int \frac{\eta(x)}{\l(x)}\d x > \frac{\int\eta(x)\d x }{\l(z)}.\]
%\end{theorem}
%\begin{proof}
%  When $\lambda_b \rightarrow 0$ the gain is equal to 
%  \begin{eqnarray}
%	G_1(z) &= & \frac{1-\exp\left(-\frac{T}{\snr}
%	\right)}{\exp\left(-\int\eta(x)\exp\left(-\frac{T}{\snr}\frac{\l(z)}{\l(x)}
%	\right)\d x\right)}
%  \end{eqnarray}
%  Since $\int\eta(x) \d x<\infty$ and $\eta(x)>0$ using Jensen's inequality 
% and the fact that $\int \eta(x)\d x \geq 1$   we have 
%    \[ \int\eta(x)\exp\left(-\frac{T}{\snr}\frac{\l(z)}{\l(x)}
%	\right)\d x > \exp\left(-\frac{T}{\snr}\int \frac{\l(z)}{\l(x)\int \eta(y)\d
%	y}\d x
%	\right)\]
%	Using the fact that $\exp(-x) < 1/x, x>0$ we have 
%	\begin{eqnarray}
%	  G_1(z) & > &
%	  \label{}
%	\end{eqnarray}
%\end{proof}
In the second hop, a subset of the MSs  which were able to receive  
information in the first-hop transmit. In the next sections we  analyze  the
following two strategies to select a subset $R_x \subset \hat{\Phi}(x)$ to transmit in the
second hop (odd time slots):
\begin{itemize}
 \item  The MS closest to the destination and that has received information in
   the first hop transmits in the second hop. This strategy requires nodes to
   know their respective locations.
 \item  The MS with the best channel (fading and path-loss) to the destination
   that has received    information in the first hop transmits. This  strategy
   requires the relays to have channel state information.
\end{itemize}

\section{Method 1: Nearest Relay to the Destination Transmits}
\label{sec:method2}
%In the previous
% selection methods, the cardinality of the set of relays  selected to  transmit
% in the second hop may be greater than one, more precisely $|R_o|$ may be
% greater than one.  This causes intra-cell  interference in the second hop and
% degrades the performance of the multi-hop  technique. In the relay selection
% method being proposed in this section, only one relay in the set $R_o$ is chosen  to
% transmit in the second hop thus reducing the intra-cell interference.  

 In this relay selection method, the node $x\in \hat{\Phi}(a), a\in\Phi_b$, closest to $r(a)$
 is selected to transmit in the second hop. To do this each node should know its
 own location, and  each packet should have location information about its
 destination.  For a fair comparison with the direct transmission scheme, we assume that  the selected
 relay transmits with power $\snr
 \sigma^2/\l(R)$. The probability of success in this relay selection method is 
 \begin{eqnarray*}
   \p_r &=&\P\left( \h{x,r(o)}\l(r) > \T (\sigma^2+I) \right),
 \end{eqnarray*}
 where  $I$ is the inter cell interference at $r(o)$, and $r$ is the distance from the  relay in the set $\hat{\Phi}(o)$  that is
 nearest to $r(o)$. More precisely
 \begin{displaymath}
   r=\left\{ \begin{array}{ll}
 \inf_{x\in \hat{\Phi}(o)} \|x-r(o)\|, & |\hat{\Phi}(o)| > 0\\
 \infty,&|\hat{\Phi}(o)|= 0.
   \end{array}
\right.
 \end{displaymath}
 $\hat{\Phi}(o)$ can be empty because of the following  two reasons:
 \begin{enumerate}
   \item The cell has no MS to begin with. The probability of this happening is
	 $1-\mu$.
   \item The BS was not able to connect to any MS in the first time slot.
 \end{enumerate}
For a fair comparison with direct transmission, we condition on  the cell at the
origin having at least   one MS
to begin with, \ie, $n_o>0$.
So
\[\p_r\mid (n_o > 0) = \p_r\cl^{-1}.\]
% $n_o>0$ does not imply $|\hat{\Phi}(o)|>0$, \ie, there are relays that the base
% station was able to connect in the first hop.  
% \[\p_r = (\p_r \mid |\hat{\Phi}(o)|>0)\p(|\hat{\Phi}(o)|>0). \]
% Since $\hat{\Phi}(o)$ is a PPP we have 
% \[\p_r = (\p_r \mid |\hat{\Phi}(o)|>0)\left(1-\exp\left(-\int_{\R^2}\delta(x)\d
% x\right)\right) .\]
 Let $F_o(r,\snr,\lambda_b)$ denote the CDF of the  first contact distribution of
 $\hat{\Phi}(o)$ from $r(o)$. It is given by 
 \begin{eqnarray}
   F_o(r,\snr,\lambda_b)&=&1-\exp\left(-\int_{B(r(o),r)}\delta(x)\d x\right).
 \end{eqnarray}
 Observe that $F_o(r,\snr,\lambda_b)$ is a defective distribution, \ie,
 $F_0(\infty,\snr,\lambda_b) <1$.  Let
 \[f_o(r,\snr,\lambda_b)= -\frac{\partial }{\partial r} F_o(r,\snr,\lambda_b)\] denote the PDF of  the first
 contact distribution. 
 Hence
 \begin{eqnarray*}
   \p_r  &=& \int_0^\infty
   \exp\left(-\frac{\T\l(R)}{\snr\l(r)}\right)
\underbrace{\E\left(\exp\left(-\frac{\T\l(R)}{\snr\sigma^2\l(r)}I\right)
   \right)}_{T_1(\lambda_b,r)}f_o(r,\snr,\lambda_b)\d r.
   \label{}
 \end{eqnarray*}
 where $I$ is the interference at $r(o)$ caused  by  transmitters in other cells. Even
 though $\int f_o(r,\snr,\lambda_b) \d r <1$ the above average is correct since the integrand
 is zero at $r=\infty$ where the remaining mass of the first contact
 distribution lies.  We now evaluate $T_1(\lambda_b,r)$.
  Let $f_a(x),  a\neq 0$, $x\in\R^2$, denote the PDF  of the nearest neighbor of
 $r(a)$ in the set $\hat{\Phi}(a)$ relative to $a$, conditioned on the event
 $|\hat{\Phi}(a)|>0$. We then have
\[ T_1(\lambda_b,r)= \prod_{a\in\Z^2\setminus\{o\}}
   \int_{\R^2}\E\left[\frac{f_a(x)}{1+\frac{\T}{\l(r)}\l(\frac{a}{\sqrt{\lambda_b}}+x-r(o)
   )\mathbf{1}(|\hat{\Phi}(a)|>0)}\right]\d x\]
  Taking the average with respect to $|\hat{\Phi}(a)|$ yields
\[\prod_{a\in\Z^2\setminus\{o\}}
   \int_{\R^2}\E\left[1-\mathbf{1}(|\hat{\Phi}(a)|>0)+\frac{f_a(x)\mathbf{1}(|\hat{\Phi}(a)|>0)}{1+\frac{\T}{\l(r)}\l(\frac{a}{\sqrt{\lambda_b}}+x-r(o)
   )}\right]\d x\]
\[\prod_{a\in\Z^2\setminus\{o\}}1-\int_{\R^2}\frac{f_a(x)(1-\exp(-\int\delta(y)\d
   y))}{1+\frac{\l(r)}{\T}\l(\frac{a}{\sqrt{\lambda_b}}+x-r(o))^{-1}}\d x.
   \]
$f_a(x)$ depends on the geometry of each cell, $\delta(x)$, $r(a)$, and  is easy to
calculate once these quantities are known. 
We now calculate the asymptotics of $\p_r$ and  the asymptotic gain.
\newline
{\em Asymptotic gain:}
In this part we scale the BS density as $\lambda_b =\snr^{-\beta}$. It is easy
to  observe that  the average number of MS in each cell that are
potential relays, \ie,  $\int \delta(x)\d x$, scales as 
\begin{equation}
  \int \delta(x)\d x \sim \int \eta(x)\d x -\frac{\T\l(R)}{\snr}\int
  \frac{\eta(x)}{\l(x)}\d x -\frac{\T C(\alpha)}{\snr^{\alpha\beta/2}}\int
  \frac{\eta(x)}{\l(x)}\d x.
  \label{eq:scale_delta}
\end{equation}
It can also be verified that  
\[\sup_{r}|F_o(r,\snr,\snr^{-\beta}) -F_o(r,\infty,0)| \rightarrow 0\]
as $\snr \rightarrow \infty$, which implies $F_o(r,\snr,\snr^{-\beta})$
converges uniformly to $F_o(r,\infty,0)$. Hence  we can interchange the
derivative and the limit  in the asymptotic analysis. 
We have 
\[f_o(r,\snr,\snr^{-\beta}) = \exp\left(-\int_{B(r(o),r)}\delta(x)\d x\right)\frac{\partial}{\partial r}
\int_{B(r(o),r)}\delta(x)\d x. \]
From \eqref{eq:scale_delta} and the fact that  $\exp(-x)\sim 1-x$ for small $x$ it follows that,
\begin{displaymath}
  f_o(r,\snr,\snr^{-\beta}) \sim \left\{ \begin{array}{ll}
	\exp\left(-f(r)\right)\left(
\frac{\partial}{\partial r}f(r)- \frac{\T\l(R)}{\snr}g(r) \right) & \alpha\beta>2\\
	\exp\left(-f(r) \right)\left(
	\frac{\partial}{\partial r}f(r) -\frac{\T C(\alpha)}{\snr^{\alpha\beta/2}}g(r) \right)&\alpha\beta<2
\end{array}\right.
\end{displaymath} 
where 
\[g(r) =\frac{\partial}{\partial r}\int_{B(r(o),r)}
\frac{\eta(x)}{\l(x)}\d x - \int_{B(r(o),r)}  \frac{\eta(x)}{\l(x)}\d x
\frac{\partial}{\partial r}f(r),\] and 
\[f(r)=\int_{B(r(o),r)}\eta(x)\d x  .\]
The following limit follows similar to the asymptotic analysis of $\p_d$
\[\lim_{\snr \rightarrow
\infty}\frac{1-T_1(\snr^{-\beta},r)}{\snr^{-\beta\alpha/2}}=\frac{\T
C(\alpha)\cl}{\l(r)},\]
   where $C(\alpha)$ is given by \eqref{eq:epstien}.
 %  We also have
 %  \begin{eqnarray*}&&\frac{\partial}{\partial
 %  r}\exp\left(-\int_{B(r(o),r)}\delta(x)\d x\right)\\
 %  &&\sim
 %  \left(1+\frac{\T\l(R)}{\snr}\gamma(r) + \T
 %  C(\alpha)\gamma(r) \snr^{-\alpha\beta/2}\right)\frac{\partial}{\partial
 %  r}\exp\left(-\int_{B(r(o),r)}\b(x)\d x\right).
 %\end{eqnarray*}
 %where $\gamma(r)= \int_{B(r(o),r)}\b(x)\l(x)^{-1}\d x$.
By some basic algebraic manipulations the asymptotic expansion of the error probability with respect to $\snr$
    with  $\lambda_b=\snr^{-\beta},\beta>0$, is
\begin{equation}
  \label{eq:asym_near}
  1-\p_2 \mid (n_o>0) \sim \left\{ \begin{array}{ll}
   \frac{\T\l(R)}{\snr \mu}\int_0^\infty
   \exp\left(-f(r)\right)\left\{g(r)+\l(r)^{-1}\frac{\partial}{\partial r}f(r)
   \right\}\d r& \alpha\beta>2 \\
   \frac{\T C(\alpha)}{\snr^{\alpha\beta/2} \mu}\int_0^\infty
   \exp\left(-f(r)\right)\left\{g(r)+\mu\l(r)^{-1}\frac{\partial}{\partial r}f(r)
   \right\}\d r& \alpha\beta<2. 
 \end{array}\right.
\end{equation}
\begin{figure}[h]
  \begin{center}
	\includegraphics[width=12cm]{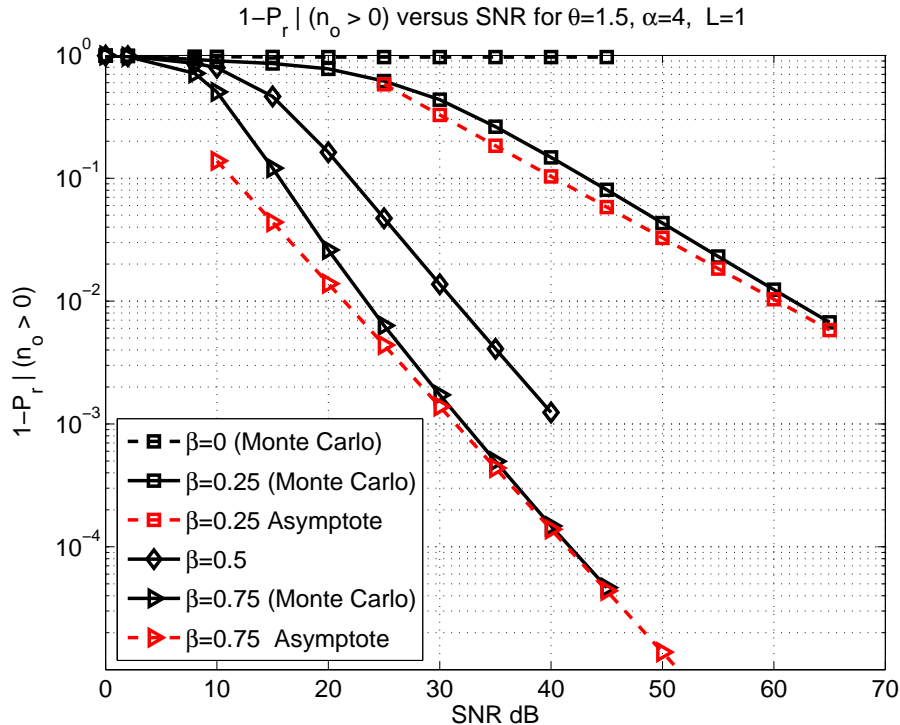}
  \end{center}
  \caption{Outage probability $1-\p_r\mid (n_o>0)$ versus $\snr$  for $\lambda_b
  =\snr^{-\beta}$ for  different
  $\beta$. The system parameters are $\alpha=4$, $\T=1.5$, $r(o)=(0.5,0.5)$, 
  $\l(x)=(1+\|x\|^{4})^{-1}$ and $\eta(y)=5\mathbf{1}_y([-0.5,0.5]^2)$. The dashed lines
  are  the asymptotes derived in \eqref{eq:asym_near}.  The dashed lines are
  the asymptotes derived in \eqref{eq:asym_near} and are approximately equal to
  $10.351 \snr^{-0.5}$ (interference-limited) and $1.387\snr^{-1}$
  (noise-limited).}
  \label{fig:near_error}
\end{figure} 
These asymptotes are plotted in Figure \ref{fig:near_error}.
From \eqref{eq:asym_near} the  asymptotic gain is 
\begin{displaymath}
  \lim_{\snr \rightarrow \infty}G(\snr,\snr^{-\beta}) \sim \left\{ \begin{array}{ll}
	\mu\l(R)^{-1}\left(\int_0^\infty
   \exp\left(-f(r)\right)\left\{g(r)+\l(r)^{-1}\frac{\partial}{\partial r}f(r)
   \right\}\d r\right)^{-1}& \alpha\beta>2 \\
   \mu \l(R)^{-1}\left(\int_0^\infty
   \exp\left(-f(r)\right)\left\{g(r)+\mu\l(r)^{-1}\frac{\partial}{\partial r}f(r)
   \right\}\d r\right)^{-1}& \alpha\beta<2. 
 \end{array}\right.
\end{displaymath}
 {\em Remarks:}
  \begin{itemize}
\item We observe that the gain is higher in the interference-limited regime than
  the noise-limited regime. This is because in this relay selection method, some
  of the cells may not be able to transmit  because they do not contain any MS,
  which  happens with probability $1-
  \cl$.
\item Since the gain does not scale with $\snr$, the diversity of this scheme is
  also equal to $\min\{1,\beta\alpha/2\}$. See Figure \ref{fig:near_error} for
  the error plot obtained by Monte Carlo
simulations  and the above asymptotes obtained theoretically.
  \end{itemize}
%\subsection{Random relay selection}
%Instead of selecting  a node in $\hat{\Phi}(o)$  that is closest to $r(o)$, we
%could have picked any node of $\hat{\Phi}(o)$ randomly. It is intuitive that
%this method  will perform worse as compared to the nearest neighbor. But this
%method does not require any geographical information and hence is easy to
%implement. The analysis of this method is similar to the previous method except
%that  the conditional nearest neighbor  distribution $f_a(r)$ is replaced by 
%the distribution of the distance of a randomly picked node from $\hat{\Phi}(a)$. More
%precisely if $y\in \R^2$ denotes a random vector  with density function
%\[\frac{\delta(r(a)-y)}{\int \delta(x) \d x},\] then $f_a(r)$ is replaced by the PDF of
%$\|y\|$. We will denote this PDF by $g_a(r)$. The rest of the analysis follows
%exactly from the previous selection method. 
% So the asymptotic expansion of the success probability with respect to $\snr$
%   and $\lambda_b=\snr^{-\beta}$ is equal to,
%   \[\p_r \mid (n_o>0)\approx 1-\T
%   C(\alpha)\gamma_r^{-1}\mu\snr^{-\alpha\beta/2}-
%   \T\l(R)\gamma_r^{-1}\snr^{-1}\]
%where 
%\[\gamma_r= \left(\int_0^\infty \l(r)^{-1}g_o(r)\d r\right)^{-1}. \]
%Hence the asymptotic gain in this case is
%equal to 
%\begin{displaymath}
%  \lim_{\snr\rightarrow \infty}G(\snr,\snr^{-\beta}) = \left\{ \begin{array}{ll}
%	{\gamma_r\l(R)^{-1}}\cl^{-1}&
%	\beta<2/\alpha\\
%	\gamma_r{\l(R)^{-1}}&
%	\beta>2/\alpha\\
%	\gamma_r \left[\frac{1+C(\alpha)\l(R)^{-1}}{\l(R)+C(\alpha)\mu
%	} \right]&\beta=2/\alpha.
%  \end{array}
%  \right.
%\end{displaymath}
%

\section{Method 2: Relay With  Best Channel to the Destination Transmits(Selection
Cooperation).}
\label{sec:method3}
In this selection procedure, the fading between a potential
relay and the destination is also included in the criterion for the relay
selection. The relay with the best channel to the destination is selected.   This method of relay selection
is called selection cooperation. In the second hop, each relay of the set
$\hat{\Phi}(o)$ can send a channel estimation packet to the destination in an
orthogonal fashion, and the destination can choose the relay with the best
channel. Alternatively, if channel reciprocity is assumed, the relays can
estimate the channel between themselves and the destination when receiving the  NACK
 and use this information to elect the best  relay  in a distributed fashion.

As in the previous section we shall find the success probability conditioned on
the cell at the origin being non-empty, \ie, $\p_r\mid n_o>0$. As indicated
earlier
\[ \p_r\mid (n_o>0) = \cl^{-1}\p_r.\] Hence we shall first calculate the
unconditional probability $\p_r$ and then multiply it with $\cl^{-1}$. The relay that is
selected is mathematically described by \[\argmax_{x\in \hat{\Phi}(o)}\{\h{xr(o)}\l(x-r(o))\}. \] The exact analysis of
this relay selection in the presence of interference is difficult and hence our
aim in this section is to  obtain the scaling behaviour of $G(\snr,\snr^{-\beta/2})$.
Let $k$  denote the cardinality of the set  $\hat{\Phi}(o)$. Since the connectivity in the first hop is
independent across relays, $k$ is a Poisson random variable with mean 
\[\E[k]=\int_{\R^2}\delta(x)\d x.\]
To make the comparison with the direct transmission easier, we assume that each node transmits with power $P=\snr\sigma^2/\l(R)$.
The probability of error is 
\[1-\p_r = \P\left( P\max_{x\in \hat{\Phi}(o)}\{\h{xr(o)}\l(x-r(o))\} <
\T(I+\sigma^2) \right),\]
where $I$ is the interference at $r(o)$ caused by concurrent transmissions in
other cells.
Conditioning on the point set $\hat{\Phi}(o)$ we have
\begin{eqnarray*}
  1-\p_r\mid \hat{\Phi}(o) 
  &=& \P\left( P\max_{x\in \hat{\Phi}(o)}\{\h{xr(o)}\l(x-r(o))\} <
\T(I+\sigma^2) \mid\hat{\Phi}(o)  \right)\\
&=&\E\left[\prod_{x\in\hat{\Phi}(o)}
1-\exp\left(-\frac{\T(\sigma^2+I)}{P\l(x-r(o))} \right)\mid\hat{\Phi}(o) \right].
\end{eqnarray*}
Since $\hat{\Phi}(o)$ is a PPP with intensity function $\delta(x)$, conditioning
on there being $k$ points in the set, each node in the set is independently distributed with
density  $\kappa(x)=\frac{\delta(x)}{\int_{\R^2}\delta(y)\d y}$.
Removing the conditioning on the locations of $\hat{\Phi}(o)$, we obtain
\begin{eqnarray}
  1-\p_r \mid (|\hat{\Phi}(o)|=k) = \E\left[1-\int_{\R^2}
  \exp\left(-\frac{\T(\sigma^2+I)}{P\l(x-r(o))} \right)\kappa(x)\d x\right]^k
  \label{eq:temp1}
\end{eqnarray}
Using binomial expansion,
\begin{equation*}
1-\p_r\mid (|\hat{\Phi}(o)|=k)=  \underbrace{\E\left[\sum_{m=0}^k (-1)^{m}{k \choose
m}\E\left[\int_{\R^2}
  \exp\left(-\frac{\T(\sigma^2+I)}{P\l(x-r(o))} \right)\kappa(x)\d x
  \right]^{m} \right]}_{a_k}.
\end{equation*}
 Hence $1-\p_r\mid
  (|\hat{\Phi}(o)|=k)$ is equal to 
  \begin{eqnarray*}
	&&\sum_{m=0}^k (-1)^{m}{k \choose
  m}\int_{\R^{2m}}\nu(x_1,\hdots,x_m)\exp\left(-\frac{\T\sigma^2\sum_{b=1}^{m} \l(x_b -r(o))^{-1}}{P}
  \right) \prod_{b=1}^{m} \kappa(x_b)d x_1\hdots \d x_m,
\end{eqnarray*}
where
\begin{eqnarray*}
 \nu(x_1,\hdots,x_m)
  &=&\E\left[\prod_{b=1}^m\exp\left(-\frac{I\T}{\l(x_b-r(o))}\right)\right]\\
  &=&\E\left[\exp\left(-I\T\varrho(x_1^m) \right)\right]\\
  &=&\E\left[\exp\left(-\T\varrho(x_1^m) \sum_{a\in\Z^2}\h{y(a)r(o)}\l(y(a)-r(o))\mathbf{1}(|\hat{\Phi}(a)|>0) \right)\right]
\end{eqnarray*}
where $y(a)$ denotes the location of the selected relay in the cell at $a$ and
$\varrho(x_1^m)=\sum_{b=1}^{m} \l(x_b
-r(o))^{-1} $.  Let $g_a(x)$ denote the PDF of $a-x$ where \[x=\argmax_{x\in
\hat{\Phi}(a)}\{\h{xr(a)}\l(x-r(a))\}.\] $g_a(x)$ is difficult to calculate and
is the reason of resorting to asymptotics. 	Since $\h{y(a)r(o)}$ is exponential
it follows that  
\[  \nu(x_1,\hdots,x_m)= \prod_{a\in
\Z^2}1-\int_{\R^2}\frac{g_a(y)(1-\exp(-\int\delta(x)\d x)}{1+\T^{-1}
\varrho(x_1^m)^{-1}\l(y+\frac{a}{\sqrt{\lambda_b}}-r(o))^{-1}}\d y.
\]
Hence the unconditional probability of error is
\[\p_r = \cl^{-1}\left[1-\exp\left(-\int_{\R^2} \delta(x)\d x\right)\sum_{k=0}^\infty a_k
\frac{(\int \delta(x)\d x)^k}{k!}.\right]\]
{\em Asymptotic gain:}
The above expansion is too unwieldy to yield any asymptotics. We shall use
\eqref{eq:temp1} to obtain the  gain in the high-$\snr$ and low-interference
regime.
Removing the conditioning in  \eqref{eq:temp1} we have
\[ 1-\p_r = \E\exp\left[-\int_{\R^2}
  \exp\left(-\frac{\T(\sigma^2+I)}{P\l(x-r(o))} \right)\delta(x)\d x\right].\]

 The above result follows from the generating function of a Poisson random
  variable. Hence the required conditional probability is
  \begin{eqnarray*}\p_r\mid
	(n_o>0) &=&\cl^{-1}\left(1-\E\exp\left[-
	\int_{\R^2}\exp\left(-\frac{\T(\sigma^2+I)}{P\l(x-r(o))} \right)\delta(x)\d  x\right]\right).\end{eqnarray*}
  An upper bound follows from Jensen's inequality: 
  \[\p_r\mid (n_o>0) \leq \cl^{-1}\left(1-\exp\left[- \int_{\R^2}
  \E\exp\left(-\frac{\T(\sigma^2+I)}{P\l(x-r(o))} \right)\delta(x)\d
  x\right]\right). \]
  Similarly a lower bound can be obtained by using the inequality
  $\exp(-x)\geq1-x$ for the inner $\exp$,
  \begin{eqnarray*}
	\p_r\mid (n_o>0)&\geq&\cl^{-1}\left( 1-\exp\left(-\int_{\R^2}\delta(x)\d
	x\right)\E\exp\left(\int_{\R^2}\frac{\T(\sigma^2+I)}{P\l(x-r(o))}\delta(x) \d x\right)\right).
\end{eqnarray*}
  To evaluate the upper and lower bounds we observe that we  will have to find
  $\E[\exp(-sI)]$.  By a procedure  similar to the derivation of
  $\nu(x_1,\hdots,x_m)$:  
 \[\E[\exp(-sI)]=\prod_{a\in
\Z^2}1-\int_{\R^2}\frac{g_a(y)(1-\exp(-\int\delta(x)\d x)}{1+s^{-1}
\l(y+\frac{a}{\sqrt{\lambda_b}}-r(o))^{-1}}\d y.\]
Recall that $\delta(x)$ is equal to 
\[\b(x)\exp\left(-\frac{\T}{\snr}\frac{\l(R)}{\l(x)}\right)
\prod_{y\in \Z^2\setminus
\{o\}}\frac{1}{1+\frac{\T}{\l(x)}\l(y/\sqrt{\lambda_b}-x)}.
\]
We now find the asymptotic lower and upper bound when $\lambda_b =\snr^{-\beta}$
 for large $\snr$.
 We first observe that 
 \[\delta(x)\sim\b(x)\left(1-\frac{\T\l(R)}{\l(x)}\snr^{-1} -
 \frac{\T}{\l(x)}C(\alpha)\snr^{-\alpha\beta/2}\right).\]
It is also easy to obtain that 
 \[\E[\exp(-sI)]\sim 1-\cl s C(\alpha).\]
 After basic algebraic manipulation, it is established that both the upper and
 the lower bounds exhibit the 
 same scaling which is
 \begin{equation}
     \p_r\mid (n_o>0) \sim \left\{
	 \begin{array}{ll}
	   1- \snr^{-1}\left(\frac{1-\cl}{\cl}\right)\T\l(R)\int_{\R^2}\left[\frac{1}{\l(x-r(o))}+\frac{1}{\l(x)}\right]\b(x)\d
   x&\alpha\beta>2\\
1-\snr^{-\alpha\beta/2}\left(\frac{1-\cl}{\cl}\right)\T C(\alpha)\int_{\R^2}\left[\frac{\cl}{\l(x-r(o))}+\frac{1}{\l(x)}\right]\b(x)\d
   x &\alpha\beta<2.
	 \end{array}\right.
   \label{eq:error_best}
 \end{equation}
Hence the gain is
\begin{equation}
  \lim_{\snr\rightarrow\infty}G(\snr,\snr^{-\beta}) =\left\{
  \begin{array}{ll}
\frac{\cl}{1-\cl}\l(R)^{-1}\left[\int_{\R^2}\left[\frac{1}{\l(x-r(o))}+\frac{1}{\l(x)}\right]\b(x)\d
   x\right]^{-1} & \alpha\beta >2 \\
\frac{\cl}{1-\cl}\l(R)^{-1}\left[\int_{\R^2}\left[\frac{\cl}{\l(x-r(o))}+\frac{1}{\l(x)}\right]\b(x)\d
   x\right]^{-1} &\alpha\beta<2
  \end{array}
  \right.
  \label{}
\end{equation}
Hence the diversity of this scheme is
\[\d_2(\snr^{-\beta})=\min\left\{1,\frac{\alpha\beta}{2}\right\}.\]
In the above analysis we assumed that the cell is non-empty and hence obtained a
maximum diversity of $1$. 
\begin{figure}[h]
  \begin{center}
	\includegraphics[width=12cm]{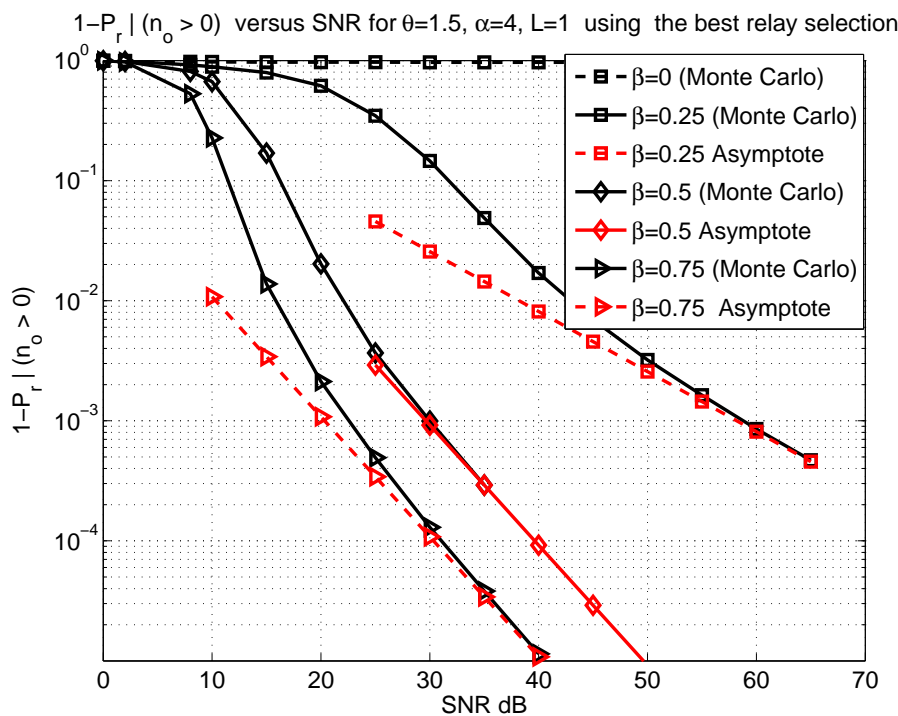}
  \end{center}
  \caption{Outage probability $1-\p_r \mid (n_o>0)$ versus $\snr$  for
  $\lambda_b =\snr^{-\beta}$ and various
  $\beta$. The system parameters are $\alpha=4$, $\T=1.5$,  $z=(0.5,0.5)$, 
  $\l(x)=(1+\|x\|^{4})^{-1}$ and $\eta(y)=5\mathbf{1}_y([-0.5,0.5]^2)$. The dashed lines
  are  the asymptotes derived in \eqref{eq:error_best} and are approximately equal to $0.812
  \snr^{-0.5}$(interference limited) and $0.108\snr^{-1}$ (noise limited).} 
  \label{fig:best_error}
\end{figure}

\section{Simulation Results and Observations}
\label{sec:sim}
In this section the gain of the proposed methods  over direct
transmission is obtained  by Monte-Carlo simulations. For the purpose of
simulation we truncate the BS lattice to  $\lambda_b^{-1/2}\{-2,-1, 0 ,1,
2\}^2$, and  $\T =1.5$ is used as the decoding threshold. The
cells are modeled as squares and the destination of each BS is located at a
random vertex of the square.  The spatial density used is 
\[\eta(y)= \lambda_m\mathbf{1}_y([-L/2,L/2]^2).\] If not specified we use
$\lambda_m=5$ and $L=1$.

%From Figure \ref{fig:all_error} we observe the effect of intra cell interference 
%when all the relays that received information in the first hop transmit. In such
%a scenario, reducing the co-cell interference would not provide additional gain.
%Correspondingly in Figure \ref{fig:gain_all},  we first observe that the gain
%$G(\snr,\snr^{-\beta})$  decreases to zero when $\beta >0$ and tends to a
%constant when $\beta=0$. When $\beta=0$  the probability of success tends to a
%constant that depends on $\lambda_b$ for  both the direct transmission, and the
%two-hop transmission and hence the gain tends to a constant limit. 
%\begin{figure}
%  \begin{center}
%\includegraphics[width=5 in]{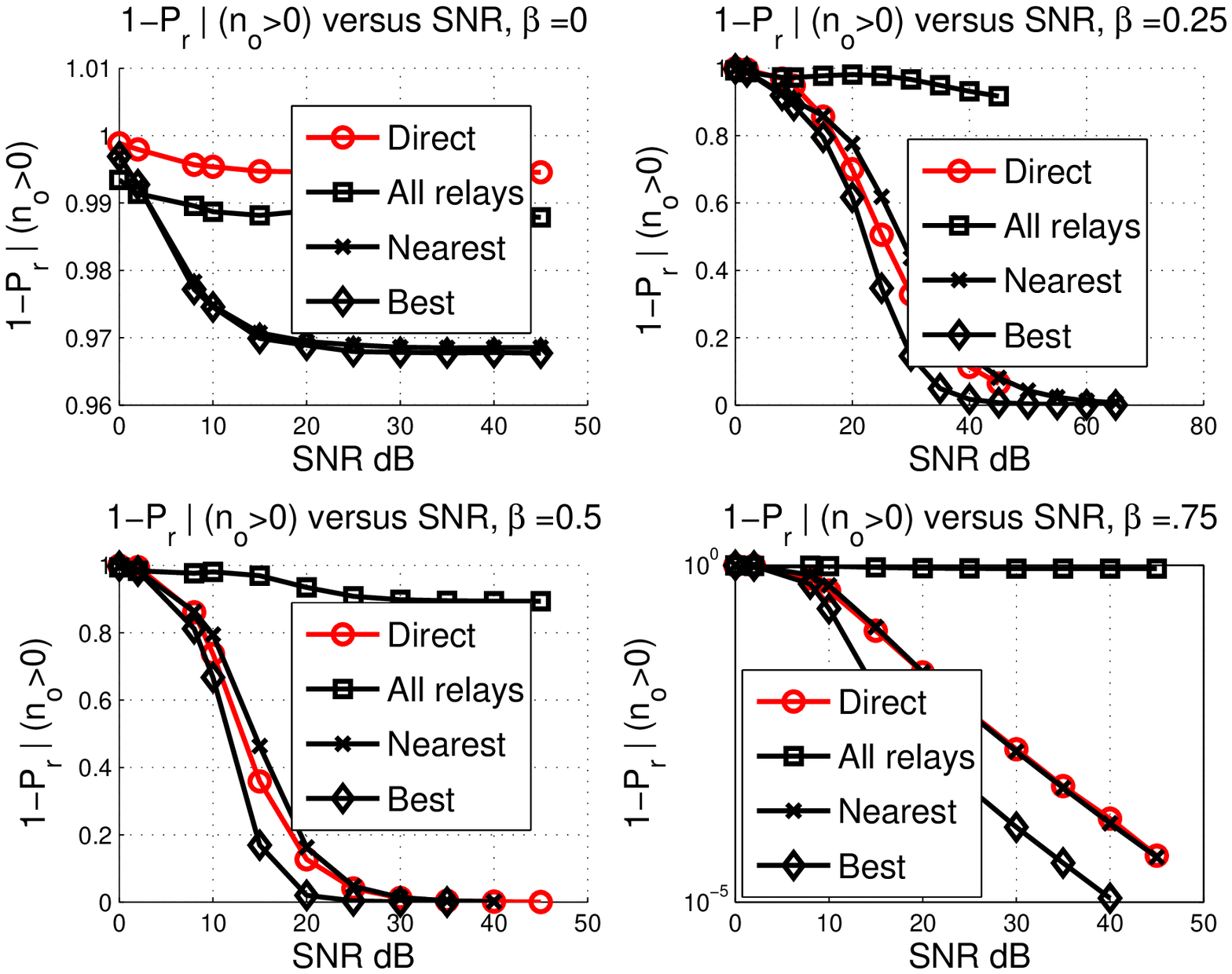}
%  \end{center}
%  \caption{Outage probability $1-\p_r\mid (n_o>0)$ versus $\snr$  for various
%  $\beta$, $L=1$.}
%  \label{fig:error_combined}
%\end{figure}
%From the first sub-figure in  Figure \ref{fig:error_combined} we observe that
%all relays transmitting performs better than the direct connection when
%$\beta=0$, which correspond to the high-interference regime. So in the scenario in
%which the adjacent cell interference cannot be adjusted (\ie, $\beta=0$), it is
%beneficial to utilize a simple two-hop scheme.
()<++>
%\begin{figure}
%  \begin{center}
%\includegraphics[width=3.5 in]{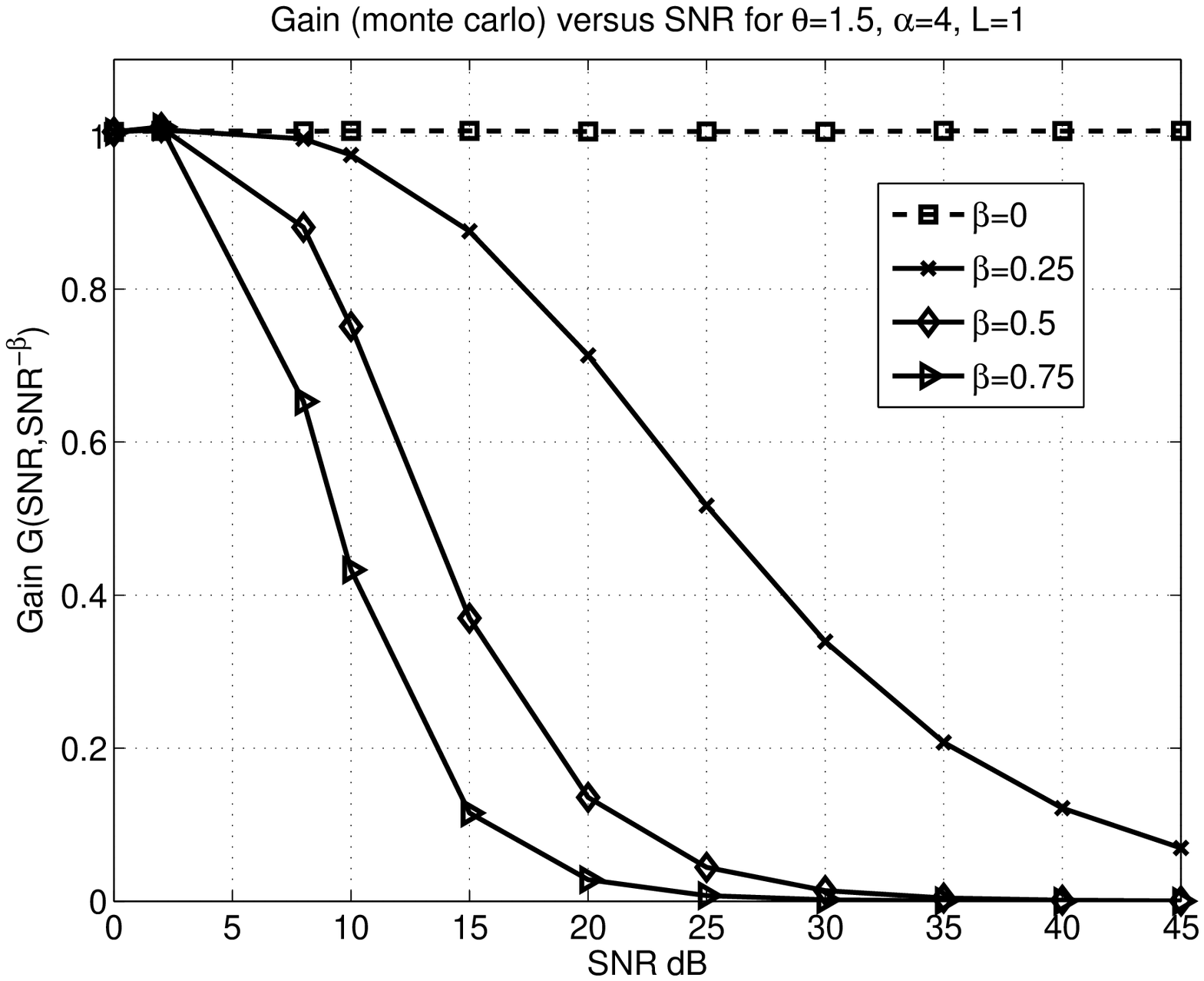}
%  \end{center}
%  \caption{$G(\snr,\snr^{-\beta})$ versus $\snr$  for various $\beta$, $\eta(y) =
%  5\mathbf{1}_y([-0.5,0.5]^2)$.}
%  \label{fig:gain_all}
%\end{figure}
In Figures \ref{fig:near_error} and \ref{fig:best_error} the error probability of the
schemes employing nearest relay to the destination and the best relay are
plotted. We observe that the asymptotes obtained from theory match perfectly
with the simulation results. As predicted by theory, the diversity obtained is
$1$ when $\alpha\beta >2$ and is equal to $\alpha\beta/2$ otherwise. 
From Figure \ref{fig:gain_nearest} and \ref{fig:gain_best}, it can be seen that the gain reaches a constant
when $\snr\rightarrow \infty$. 
\begin{figure}
  \begin{center}
\includegraphics[width=3.5 in]{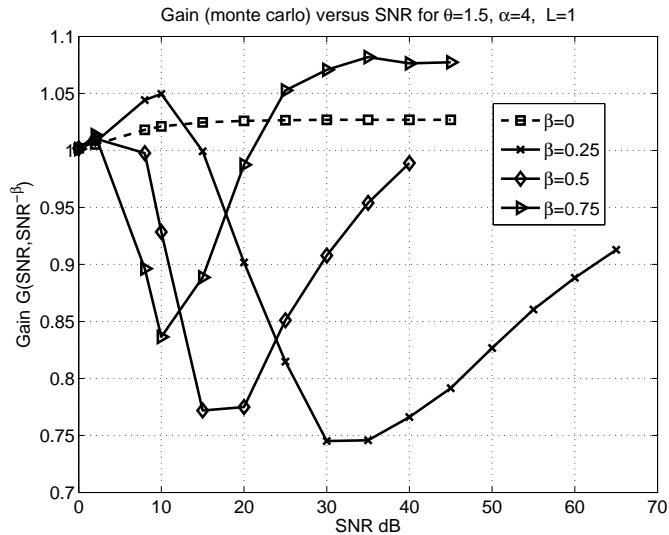}
  \end{center}
  \caption{$G(\snr,\snr^{-\beta})$ versus $\snr$  for various $\beta$. Relay
  closest to the destination is selected.}
  \label{fig:gain_nearest}
\end{figure}
\begin{figure}
  \begin{center}
\includegraphics[width=3.5 in]{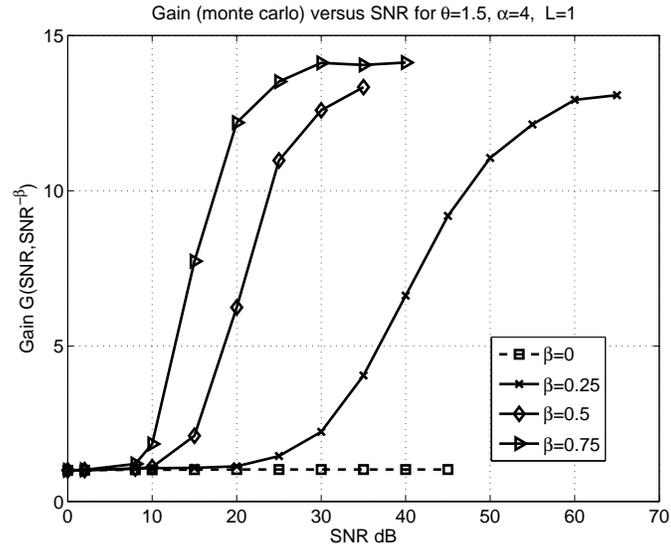}
  \end{center}
  \caption{$G(\snr,\snr^{-\beta})$ versus $\snr$  for various $\beta$. Relay
  with the best channel to the destination is selected. }
  \label{fig:gain_best}
\end{figure}
We observe that the best-relay selection scheme performs the best as expected. 
\begin{figure}
  \begin{center}
\includegraphics[width=3.5 in]{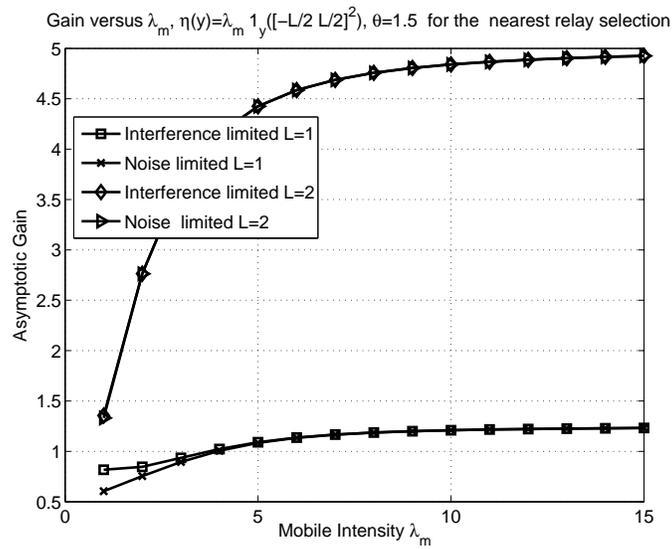}
  \end{center}
  \caption{Asymptotic gain versus $\lambda_m$ where $\lambda_m$ is the intensity
  in $\eta(y) =\lambda_m\mathbf{1}_y([-L/2,L/2]^2)$, $\l(x)=\|x\|^{-\alpha}$,
  $\T=1.5$ and $z=(-L/2,L/2)$.}
  \label{fig:near_lambda}
\end{figure}
\begin{figure}
  \begin{center}
\includegraphics[width=3.5 in]{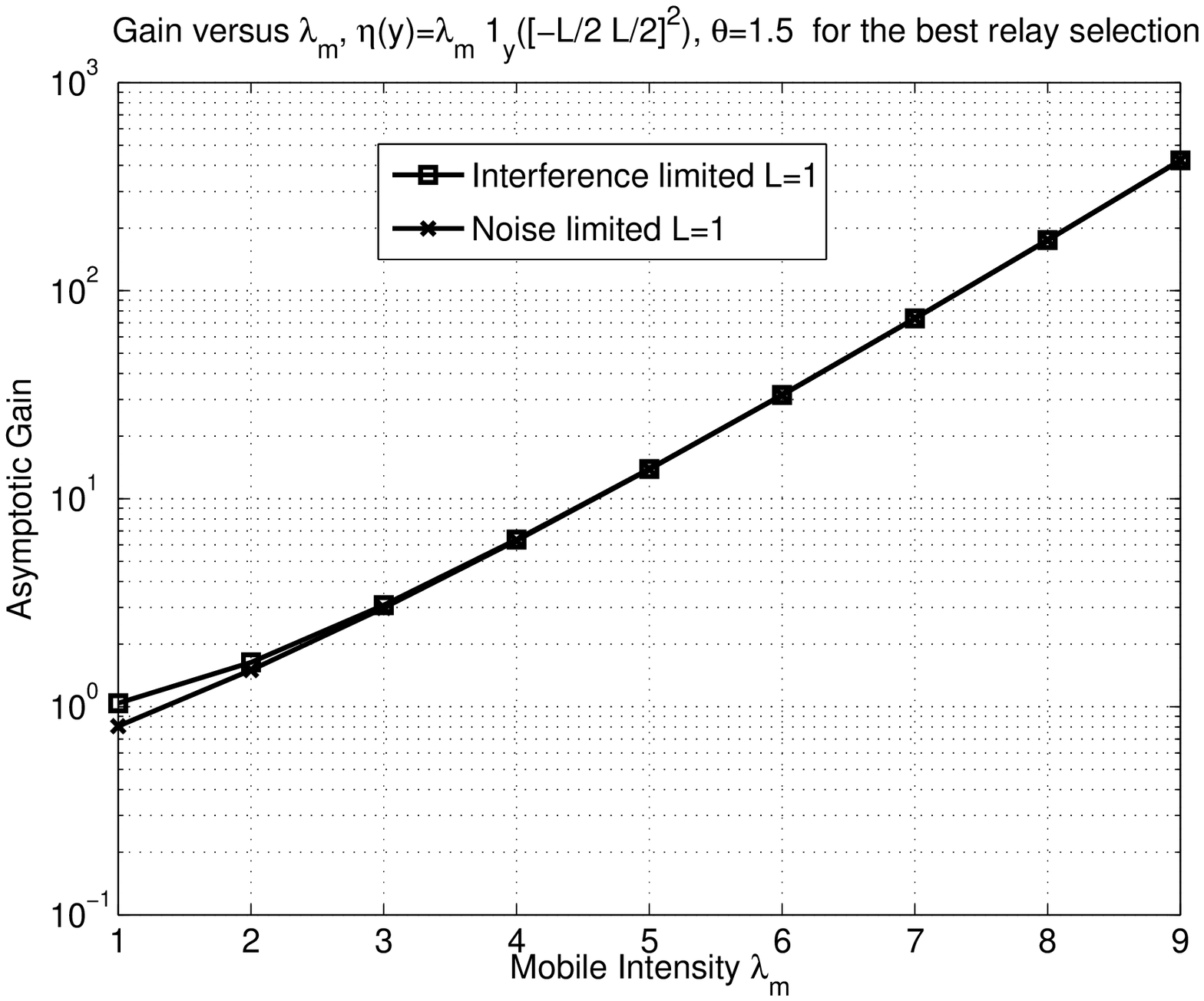}
  \end{center}
  \caption{Asymptotic gain versus $\lambda_m$ where $\lambda_m$ is the intensity
  in $\eta(y) =\lambda_m\mathbf{1}_y([-L/2,L/2]^2)$, $\l(x)=\|x\|^{-\alpha}$,
  $\T=1.5$ and $z=(-L/2,L/2)$.}

  \label{fig:best_lambda}
\end{figure}
In Figure \ref{fig:best_lambda},  we observe that the asymptotic gain increases
exponentially  with $\lambda_m$  because of  the $(1-\mu)/\mu$ factor in
the expression for the asymptotic gain.
Setting $\lambda_b= \snr^{-\beta}$ reduces the spatial reuse factor as the $\snr$
increases. The effective throughput density of the network  is equal to
$\p_2\log(1+\T)\snr^{-\beta}$ and the  maximum  of this throughput density is
the transmission capacity \cite{weber:2005}. In Figure
\ref{fig:cell_TC},  we plot $(\p_2\mid n_o>0)\log(1+\T)\snr^{-\beta}$ versus $\snr$ for various $\beta$.
We observe that for each $\snr$ there is a $\beta$ that maximizes
the throughput density, and  that as $\snr \rightarrow \infty$, the
maximizing $\beta$ tends to $0$, which is intuitive. The figure indicates  that
a throughput density of $\approx 0.1 \text{bps}/\text{m}^2$  is achieved at low
$\snr$,  and that it increases
with $\snr$.
\begin{figure}
  \begin{center}
\includegraphics[width=3.5 in]{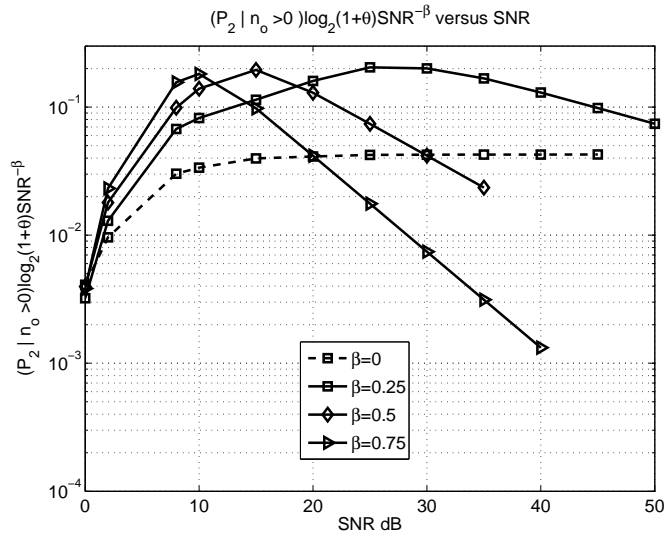}
  \end{center}
  \caption{$(\p_2\mid n_o >0)\log_2(1+\T)\snr^{-\beta}$ versus $\snr$  for various $\beta$.
  The best relay selection scheme is used.}
  \label{fig:cell_TC}
\end{figure}

\section{Conclusions}
In this paper we have analyzed the outage in a two-hop cellular system under
consideration of all the node location statistics. Outage results were provided  for two relay
selection schemes, namely  nearest-relay selection and best-relay
selection. We observed that the diversity obtained is
$\min\{1,\alpha\beta/2\}$ where $\alpha$ is the path-loss exponent, when the
density of the base stations scale as $\lambda_b =\snr^{-\beta}$ (alternatively
$\sir =\Theta(\snr^{\alpha\beta/2})$). From this
result we can infer that the system is noise-limited (even for high $\snr$) when
$\alpha\beta>2$ and interference-limited otherwise.
The asymptotic outage gain of the two-hop system over direct transmission takes  only
two values as a function of $\beta$ depending on the relay selection scheme. 
The 
gain in selecting a relay with the best channel over a direct transmission
increases exponentially with the  density of the {\em available} relays. The
gain also increases with increasing source-destination distance. From
simulations we conclude that the  gain in selecting the best relay outweighs the  overhead in estimating the fading coefficients 
between the relays and the destination as compared to the near-relay selection
method.  The techniques
 introduced in this paper can be extended for the spatial analysis of other relay
 selection schemes.

 \bibliographystyle{ieeetr}
\bibliography{point_process}	
\end{document}